\documentclass[11pt]{article}
\usepackage{amsthm}
\usepackage{graphicx}

\graphicspath{{./graphics/}}

\textwidth 15cm
\textheight 22cm
\hoffset -1.5cm
\voffset -1cm

\newtheorem{theorem}{Theorem}
\newtheorem{lemma}[theorem]{Lemma}
\newtheorem{corollary}[theorem]{Corollary}

% STOC review

\begin{document}

\title{Some Results on the Circuit Complexity of\\ Bounded Width Circuits and Nondeterministic Circuits}

\author{Hiroki Morizumi\\
{\small Shimane University, Japan}\\
{\small morizumi@cis.shimane-u.ac.jp}
}

\date{}

\maketitle

\begin{abstract}
In this paper, we consider bounded width circuits and nondeterministic circuits
in three somewhat new directions.
In the first part of this paper, we mainly consider bounded width circuits.
The main purpose of this part is to prove that there is a Boolean function $f$
which cannot be computed by any nondeterministic circuit of size $O(n)$
and width $o(n)$.  
To the best of our knowledge, this is the first result on the lower bound
of (nonuniform) bounded width circuits computing an explicit Boolean function,
even for deterministic circuits.
Actually, we prove a more generalized lower bound.
Our proof outline for the lower bound also provides a satisfiability algorithm
for nondeterministic bounded width circuits.
In the second part of this paper, we consider the power of nondeterministic
circuits. We prove that there is a Boolean function $f$ such that
the nondeterministic $U_2$-circuit complexity of $f$ is at most $2n + o(n)$
and the deterministic $U_2$-circuit complexity of $f$ is $3n - o(n)$.
This is the first separation on the power of deterministic and
nondeterministic circuits for general circuits.
In the third part of this paper, we show a relation between deterministic
bounded width circuits and nondeterministic bounded width circuits.
As the main consequence, we prove that
$\mathsf{L/quasipoly} \supseteq \mathsf{NL/poly}$.
As a corollary, we obtain that $\mathsf{L/quasipoly} \supset \mathsf{NL}$.
To the best of our knowledge, this is the first result on $\mathsf{L}$
with large (more precisely, superpolynomial size) advice.
\end{abstract}

\section{Introduction}

Bounded width circuits and nondeterministic circuits are computation models
related to bounded space computation and nondeterministic computation.
In this paper, we consider bounded width circuits and nondeterministic circuits
in three somewhat new directions.
We describe the three directions in Section~\ref{subsec:intro_bwc},
Section~\ref{subsec:intro_ndc} and Section~\ref{subsec:intro_l}, respectively.
Section~\ref{subsec:intro_bwc}, Section~\ref{subsec:intro_ndc} and
Section~\ref{subsec:intro_l} correspond to Section~\ref{sec:bwc},
Section~\ref{sec:ndc} and Section~\ref{sec:l}, respectively.

\subsection{Bounded width circuits} \label{subsec:intro_bwc}

We prove the following theorem and corollary.
(Note our definition of nondeterministic bounded width circuits.
 See Section~\ref{sec:pre}.)
\begin{theorem} \label{thrm:bwc_main}
There is a Boolean function $f$ as follows: If a nondeterministic circuit
of size $s$ and width $w$ computes $f$, then
$$w = \Omega(\frac{n^4}{4^{\frac{s}{n}}s^3}) - \frac{\log_2 s}{2}.$$
\end{theorem}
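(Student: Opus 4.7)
The plan is to prove Theorem~\ref{thrm:bwc_main} by a counting argument: I would upper-bound the number of distinct nondeterministic circuits of size $s$ and width $w$ on $n$ inputs, and compare this count to the number of Boolean functions such circuits collectively compute, so that at least one Boolean function $f$ lies outside the computable set.

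Concretely, under the width definition recalled in Section~\ref{sec:pre}, at most $w$ intermediate values are live at any time, so each of the $s$ gates can be described by a gate type, two input indices (drawn from the live slots together with any input variable or nondeterministic bit it reads), and a small amount of bookkeeping. Multiplying out over the $s$ gates yields an upper bound $N(n,s,w)$ on the number of distinct circuits. The specific shape $n^4/(4^{s/n} s^3)$ in the bound suggests grouping the $s$ gates into $n$ blocks of average length $s/n$---one block per input variable---where inside each block the number of local behaviors is at most a constant to the $s/n$ power, and across blocks the interface carries roughly $2w + \log_2 s$ bits. The polynomial factors $n^4$ and $s^3$ would arise from bookkeeping of how the $n$ input variables are placed and how live wires are labeled, while the correction term $-(\log_2 s)/2$ is consistent with a per-gate description length of $2w + \log_2 s$ bits. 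Comparing $\log_2 N(n,s,w)$ against $2^n$ and solving for $w$ should then extract the stated inequality.

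To promote this from an existence statement to an \emph{explicit} Boolean function (as claimed in the abstract), I would enumerate the finite set of circuit descriptions in a canonical order and use a Kannan-style diagonalization, placing $f$ in a small complexity class. The satisfiability-algorithm remark in the abstract also suggests that the enumeration is constructive enough that such a diagonalization is immediate.

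The main obstacle is making the block-level counting tight enough to yield the $4^{s/n}$ factor (rather than a naive $2^{O(s)}$ arising from a per-gate count) and extracting the precise polynomial factors $n^4$ and $s^3$ rather than weaker powers. The distinctive mechanics of width in a \emph{nondeterministic} circuit---how guess bits affect the live count and how the interface between blocks is formally described---will require careful accounting; this is precisely where the paper's non-standard definition of nondeterministic width, flagged in the introduction, is expected to enter the argument in a crucial way.
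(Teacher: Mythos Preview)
Your approach is a genuine mismatch with the statement and would not produce the claimed bound. A Shannon-style count of nondeterministic width-$w$ circuits of size $s$ on $n$ inputs gives at most roughly $(n+w+O(1))^{O(s)}$ descriptions, so the threshold for some function to be missed is governed by a comparison of $s\log(n+w)$ against $2^n$. There is no mechanism by which such a count manufactures the very particular shape $n^4/(4^{s/n}s^3)$: that expression is not a counting artifact at all, but the fingerprint of Theorem~\ref{thrm:bwc_brs}, the $\exp(\Omega(n/(4^k k^3)))$ lower bound for nondeterministic syntactic read-$k$-times branching programs, specialized to $k=s/n$. Your attempt to reverse-engineer the $4^{s/n}$ factor as ``$n$ blocks of $s/n$ gates each'' and the $-(\log_2 s)/2$ term as per-gate bookkeeping is reading structure into the formula that is not there.

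The paper's proof is a reduction, not a count. The two ingredients you are missing are: (i)~Lemma~\ref{lem:bwc_conv}, which converts any nondeterministic read-$k$-times circuit of size $s$ and width $w$ into a nondeterministic syntactic read-$k$-times branching program of size $4^w s$, by expanding each layer into its at most $2^w$ possible value vectors; and (ii)~the choice of the hard function. The function $f$ is explicit: on inputs $x_1,\dots,x_{2n},z_1,\dots,z_{2n}$ it outputs $0$ unless exactly $n$ of the $z_i$ are $1$, in which case it evaluates $f_{BRS}$ on the selected $x_i$'s. Given any circuit $C$ for $f$, an averaging argument picks $n$ of the $x_i$'s that are each read at most $s/n$ times in $C$; hardwiring the $z_i$'s to select exactly those variables yields a read-$(s/n)$-times circuit for $f_{BRS}$. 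Applying (i) and then Theorem~\ref{thrm:bwc_brs} gives $4^w s \ge \exp\bigl(\Omega(n/(4^{s/n}(s/n)^3))\bigr)$, and taking logarithms yields the stated bound on $w$. The satisfiability algorithm in Section~\ref{subsec:bwc_sat} also comes from this same circuit-to-branching-program conversion combined with a known branching-program SAT algorithm, not from any enumeration of circuit descriptions, so that hint was pointing you toward a reduction rather than toward diagonalization.
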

\begin{corollary} \label{coro:bwc_main}
There is a Boolean function $f$ which cannot be computed by any
nondeterministic circuit of size $O(n)$ and width $o(n)$.  
\end{corollary}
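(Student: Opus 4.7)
The plan is to obtain the corollary as a direct substitution into Theorem~\ref{thrm:bwc_main}. I would take $f$ to be the witness function supplied by that theorem, and argue by contradiction: suppose $f$ is computable by a nondeterministic circuit of size $s = O(n)$, say $s \le cn$ for some constant $c$ and all sufficiently large $n$, and width $w = o(n)$.

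The key calculation is to evaluate the lower bound of Theorem~\ref{thrm:bwc_main} in this regime. Since $s/n \le c$, the exponential factor $4^{s/n}$ is bounded by the constant $4^c$, and $s^3 \le c^3 n^3$. Consequently
$$\frac{n^4}{4^{s/n}\, s^3} \;\ge\; \frac{n^4}{4^c \cdot c^3 n^3} \;=\; \frac{n}{4^c c^3} \;=\; \Omega(n).$$
At the same time the additive correction satisfies $\frac{\log_2 s}{2} = \frac{\log_2(cn)}{2} = O(\log n)$. Substituting both estimates into the bound of Theorem~\ref{thrm:bwc_main} yields
$$w \;=\; \Omega(n) - O(\log n) \;=\; \Omega(n),$$
contradicting the assumption $w = o(n)$.

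The hard part of establishing the corollary is really already absorbed into Theorem~\ref{thrm:bwc_main}; once the theorem is in hand, the corollary is a mechanical consequence of choosing the right regime of parameters. The point worth flagging is that $s = O(n)$ is precisely the threshold at which the $4^{s/n}$ factor stays bounded, so the main $\Omega(n^4/s^3)$ part of the bound kicks in as $\Omega(n)$ and dominates the $-\frac{\log_2 s}{2}$ correction. I do not anticipate any technical obstacle in writing up this step beyond noting this parameter regime cleanly.
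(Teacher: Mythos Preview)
Your proposal is correct and is exactly the intended derivation: the paper does not give a separate proof of Corollary~\ref{coro:bwc_main} but treats it as an immediate consequence of Theorem~\ref{thrm:bwc_main}, and your parameter substitution $s=O(n)$, $w=o(n)$ is the natural way to extract it.
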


To the best of our knowledge, this is the first result on the lower bound
of (nonuniform) bounded width circuits computing an explicit Boolean function,
even for deterministic circuits.
Although Theorem~\ref{thrm:bwc_main} belongs to the study of time-space
tradeoffs, our computation models are completely nonuniform.

Proving that there is a Boolean function $f$ which cannot be computed by any
(deterministic) circuit of size $O(n)$ and depth $O(\log n)$ is one of
realistic goals in circuit complexity.
Corollary~\ref{coro:bwc_main} resolves the width variant of the open problem
in a stronger form.
Proving a superlinear size lower bound for general circuits is a central
problem in circuit complexity.
Corollary~\ref{coro:bwc_main} implies that we can prove a superlinear
lower bound if the width is slightly bounded.

The proof of our lower bounds is based on the size lower bound for
nondeterministic syntactic read-$k$-times branching programs
in 1993~\cite{BRS93}.
The relation has not been known for a long time.

Our proof outline for the lower bound also provides a satisfiability algorithm
for nondeterministic bounded width circuits.
See Section~\ref{subsec:bwc_sat} for the details.

\subsection{The power of nondeterministic circuits} \label{subsec:intro_ndc}

Nondeterministic circuits are a nondeterministic variant of Boolean circuits
as a computation model.
While both of nondeterministic computation and circuit complexity
are central topics in computational complexity, the circuit complexity
of nondeterministic circuits is relatively not well studied.
The author proved a $3(n-1)$ lower bound for the size of nondeterministic
$U_2$-circuits computing the parity function in his previous paper~\cite{M15}.
It was known that the minimum size of deterministic $U_2$-circuits computing
the parity function exactly equals $3(n-1)$~\cite{S74}.
Thus, nondeterministic computation is useless to compute the parity function by $U_2$-circuits.

In this paper, we consider the opposite directions, i.e., the case
that nondeterministic computation is useful.
We denote by $size^{\rm dc}(f)$ the size of the smallest deterministic
$U_2$-circuit computing a function $f$, and denote by $size^{\rm ndc}(f)$
the size of the smallest nondeterministic $U_2$-circuit computing
a function $f$.
We prove the following theorem.

\begin{theorem} \label{thrm:ndc_main}
There is a Boolean function $f$ such that $size^{\rm ndc}(f) \leq 2n + o(n)$
and $size^{\rm dc}(f) = 3n - o(n)$.  
\end{theorem}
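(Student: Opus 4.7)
The plan is to construct an explicit Boolean function $f$ that admits a nondeterministic shortcut while remaining rigid enough to force a matching deterministic $U_2$-lower bound of $3n - o(n)$. A natural template is $f(x) = \varphi(\pi_1(x), \ldots, \pi_k(x))$ with $k = o(n)$, where each $\pi_j$ is an aggregator over a block of the input bits and $\varphi$ is a small combiner on $k$ bits. Since parity itself is already ruled out as a separating function by the author's earlier $3(n-1)$ nondeterministic lower bound for parity~\cite{M15}, the aggregators must be chosen so that their values are cheap to \emph{verify} once guessed, even though they remain expensive to compute from scratch in the $U_2$ basis.

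For the nondeterministic upper bound, I would introduce a nondeterministic wire $z_j$ guessing the value of $\pi_j(x)$ for each $j$, compute $\varphi(z_1, \ldots, z_k)$ with constant-size circuitry, and verify the guesses using the remaining gate budget. The target is a verifier costing at most $2$ gates per input bit, so that the total circuit size lands at $2n + o(n)$. For the deterministic lower bound, I would invoke gate elimination in Schnorr's style: given any deterministic $U_2$-circuit of size $s$ computing $f$, locate an input variable whose restriction to a suitable constant eliminates at least three gates while leaving the residual function inside the same parametric family. Iterating this over $n - o(n)$ variables yields $s \geq 3n - o(n)$.

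The main obstacle is reconciling the two requirements within a single function. The deterministic argument needs $f$ to be ``parity-like'', so that every essential variable forces three gates of elimination, but the nondeterministic circuit must be able to sidestep the explicit parity computation (which already costs three gates per variable in $U_2$) and replace it with a lighter guess-and-verify gadget. Checking that the aggregator family stays closed under the restrictions used in gate elimination, and that the verifier genuinely saves one gate per variable in the nondeterministic setting, is the technical heart of the proof. The $o(n)$ slack on both sides must simultaneously absorb the combiner $\varphi$ and all bookkeeping, which constrains $k$ to be strictly sublinear.
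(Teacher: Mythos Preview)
Your template $f=\varphi(\pi_1,\ldots,\pi_k)$ and the deterministic lower-bound plan via Schnorr-style gate elimination are exactly what the paper does: it takes $k=\sqrt{n}$, blocks of size $\sqrt{n}$, $\pi_j=\mathrm{Parity}_{\sqrt{n}}$, and $\varphi=\mathrm{OR}$, and proves $3n-o(n)$ by eliminating three gates per variable except for one variable per block, which is fixed at the end to kill that block's parity.

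The gap is in your nondeterministic upper bound. You propose to guess the \emph{values} $z_j=\pi_j(x)$ and then \emph{verify} all of them with at most two gates per input bit. As you already observe, parity aggregators are ruled out for this scheme, since verifying $z_j=\mathrm{Parity}(\text{block }j)$ is itself a parity computation and costs $3$ gates per variable even nondeterministically. You then ask for some other aggregator that is cheap to verify yet forces three gates of elimination deterministically, but you do not exhibit one, and it is far from clear such a function exists; this is precisely the tension you flag as ``the technical heart'' without resolving it. The paper sidesteps the whole issue with a different idea it calls \emph{nondeterministic selecting}: because $\varphi=\mathrm{OR}$ matches the existential semantics of nondeterminism, one does not guess the values of the $\pi_j$ at all. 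Instead one guesses the \emph{index} $i$ of a block with parity $1$, routes that block's $\sqrt{n}$ inputs through a selector circuit of size $2n+o(n)$, and feeds them into a single $\mathrm{Parity}_{\sqrt{n}}$ circuit of size $o(n)$. No verification is needed, and parity can remain the aggregator. Your guess-and-verify framework does not reach this construction, so the upper bound as stated does not go through.
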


To prove Theorem~\ref{thrm:ndc_main}, we introduce a simple proof
strategy, and call the key idea {\em nondeterministic selecting}.
In Section~\ref{subsec:ndsel}, we explain nondeterministic selecting
and the proof outline using it.

\subsection{Bounded space, nondeterminism, and large advice} \label{subsec:intro_l}

It is easily confirmed that deterministic bounded width circuits
with exponential size can compute an arbitrary Boolean function
even if the width is three.
In this third part, we consider deterministic bounded width circuits
with large size, which also means that we consider bounded space computation
with large advice in Turing machines.
We especially consider deterministic circuits of width $O(\log n)$
and quasipolynomial size.

Actually, we prove the following relation between deterministic
bounded width circuits and nondeterministic bounded width circuits.
\begin{theorem} \label{thrm:l_conv}
Any nondeterministic circuit of size $s$ and width $w$ can be converted
to a deterministic circuit of size $2^{O((w + \log s)\log s)}$ and width $w + O(\log s)$.
\end{theorem}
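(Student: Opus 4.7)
The approach is to view the nondeterministic circuit as a nondeterministic branching program on a layered graph with $2^w$ configurations per layer (the assignments of the $w$ wires crossing each cut) and $s$ layers. A deterministic gate gives each configuration a unique successor, while a nondeterministic input gate gives two successors, and the circuit accepts $x$ iff there is a path from the initial configuration to an accepting configuration in the induced graph.

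To decide reachability I would apply the Savitch-style recursion
\[
  \mathrm{reach}_k(u,v)\;=\;\bigvee_{m\in\{0,1\}^w} \mathrm{reach}_{k-1}(u,m)\wedge \mathrm{reach}_{k-1}(m,v),
\]
whose answer is the single output bit $\mathrm{reach}_{\log s}(\text{start},\text{accept})$. The deterministic circuit is produced by unrolling this recursion completely: at each level of each invocation I instantiate a separate sub-circuit for every midpoint $m\in\{0,1\}^w$. Consequently $u$, $v$ and $m$ are never carried on wires; their values are hard-coded into the position in the circuit. The base case reduces to an edge check which, with hard-coded $u,v$, is either a static Boolean constant (for a nondeterministic gate, checking that $u$ and $v$ agree except on the guessed bit) or a function of a single input bit of $x$ (for a deterministic gate).

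The key step is the topological layout. I would implement the outer $\bigvee_m$ as a sequential accumulator of $2^w$ OR gates, and each $\mathrm{reach}_{k-1}(u,m)\wedge\mathrm{reach}_{k-1}(m,v)$ by computing the two sub-calls in sequence (first sub-call, save its one-bit output, second sub-call, then AND). At any cut inside the recursion at depth $d$ the live wires are only one accumulator bit and one partial-AND bit per enclosing recursion level, plus $O(1)$ local wires, giving width $O(\log s)\le w+O(\log s)$. Each recursion level multiplies the gate count by $2\cdot 2^w=2^{w+1}$ (two sub-calls per midpoint, $2^w$ midpoints per level), so after $\log s$ levels and with $O(\log s)$ bookkeeping per iteration the total size is $(2^{w+1})^{\log s}\cdot\mathrm{poly}(\log s) = 2^{O((w+\log s)\log s)}$, matching the claim.

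I expect the main technical subtlety to be the width count: the naive worry is that because each recursive call involves a $w$-bit midpoint, $\Omega(w\log s)$ wires must be simultaneously live. The point to check is that in the fully hard-coded unrolling the $\log s$ midpoints occupy distinct circuit positions rather than simultaneous wires, so they only inflate the size; all that stays live across recursion depths are the $O(1)$ bits per level that track the sequential OR-accumulator and the sequential AND.
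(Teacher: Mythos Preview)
Your approach is correct and is essentially the paper's own argument: the paper's key lemma splits the nondeterministic circuit at a middle layer and, for each of the $2^w$ configurations at the cut, recursively determinizes the two halves and checks that both are satisfiable---which is exactly one step of your Savitch recursion with hard-coded midpoints, and applying the lemma $O(\log s)$ times is your full unrolling. Your base-case description is slightly imprecise (a single layer transition involves up to $w$ gates, so the edge check with hard-coded $u,v$ is an AND of up to $w$ per-gate checks, each a constant or a function of one or two actual input bits, rather than a single such check), but this is still an $O(w)$-gate, $O(1)$-width computation and does not affect the final bounds.
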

As the main consequence, we prove the following theorem.
\begin{theorem} \label{thrm:l_main}
$\mathsf{L/quasipoly} \supseteq \mathsf{NL/poly}$.
\end{theorem}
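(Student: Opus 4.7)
The plan is to combine Theorem~\ref{thrm:l_conv} with the standard circuit characterization of the two classes involved. Specifically, under the paper's conventions from Section~\ref{sec:pre}, one has $L \in \mathsf{NL/poly}$ iff $L$ is computed by a nonuniform family of nondeterministic circuits of width $O(\log n)$ and polynomial size, and $L \in \mathsf{L/quasipoly}$ iff $L$ is computed by a nonuniform family of deterministic circuits of width $O(\log n)$ and quasipolynomial size. Both equivalences are via the standard configuration-graph simulation: a logspace machine reading a polynomial (or quasipolynomial) advice string has only $\mathrm{poly}(n)$ configurations per time step, which fits inside width $O(\log n)$; conversely, a bounded-width circuit can be evaluated on the fly in logspace from its description supplied as advice, with the advice length tracking the circuit size.

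With this in hand, I would take $L \in \mathsf{NL/poly}$ and fix a nondeterministic circuit family for $L$ with width $w = O(\log n)$ and size $s = \mathrm{poly}(n)$. Then Theorem~\ref{thrm:l_conv} converts each such circuit into a deterministic circuit of width $w + O(\log s) = O(\log n)$ and size
\[
2^{O((w + \log s)\log s)} \;=\; 2^{O(\log^2 n)},
\]
which is quasipolynomial. By the characterization of $\mathsf{L/quasipoly}$ above, this places $L$ in $\mathsf{L/quasipoly}$, giving the claimed inclusion.

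The step that needs the most care is the first paragraph: matching the paper's definition of (nondeterministic) bounded width circuits to the notion of width under which the $\mathsf{NL/poly}$ and $\mathsf{L/quasipoly}$ characterizations are exact. Any mild mismatch (for instance, layered versus unlayered circuits, or per-layer gate count versus cut width) would have to be absorbed by a routine simulation. Fortunately the target inclusion is robust to $O(\log n)$ additive slack in width and polynomial multiplicative slack in size, so even a lossy simulation goes through. Once that identification is pinned down, Theorem~\ref{thrm:l_main} reduces to plugging $w = O(\log n)$ and $\log s = O(\log n)$ into Theorem~\ref{thrm:l_conv} and observing that $(w + \log s)\log s = O(\log^2 n)$.
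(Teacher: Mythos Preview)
Your proposal is correct and follows essentially the same approach as the paper's own proof: identify $\mathsf{NL/poly}$ with nondeterministic circuits of width $O(\log n)$ and polynomial size, identify $\mathsf{L/quasipoly}$ with deterministic circuits of width $O(\log n)$ and quasipolynomial size, and then plug $w=O(\log n)$ and $s=\mathrm{poly}(n)$ into Theorem~\ref{thrm:l_conv}. The paper's proof is terser and simply asserts the two circuit characterizations, whereas you spell out why they hold and flag the definitional matching as the point needing care, but the underlying argument is the same.
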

See Section~\ref{subsec:l_pre} for the definitions.
Since $\mathsf{NL/poly} \supset \mathsf{NL}$, the following corollary
is immediately obtained.
\begin{corollary} \label{coro:l_main}
$\mathsf{L/quasipoly} \supset \mathsf{NL}$.
\end{corollary}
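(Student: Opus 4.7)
My strategy is to read Corollary~\ref{coro:l_main} as Theorem~\ref{thrm:l_main} composed with the textbook separation $\mathsf{NL/poly} \supsetneq \mathsf{NL}$. Once Theorem~\ref{thrm:l_main} supplies $\mathsf{L/quasipoly} \supseteq \mathsf{NL/poly}$, all that remains is to exhibit a concrete language lying in $\mathsf{NL/poly}$ but not in $\mathsf{NL}$, and then concatenate the two inclusions. So the corollary really consists of a single observation layered on top of Theorem~\ref{thrm:l_main}.

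\textbf{Key step.} For the separation I would fix any undecidable unary language $U \subseteq \{1\}^{*}$ --- for instance, the unary encoding of a halting set. A single advice bit per input length, namely the indicator of whether $1^{n} \in U$, together with a deterministic log-space test of whether the input string is the all-ones string of its length, places $U$ in $\mathsf{L/poly}$ and hence in $\mathsf{NL/poly}$. Since $\mathsf{NL} \subseteq \mathsf{P}$, every language in $\mathsf{NL}$ is recursive, so $U \notin \mathsf{NL}$. Chaining with Theorem~\ref{thrm:l_main} yields $\mathsf{L/quasipoly} \supseteq \mathsf{NL/poly} \supsetneq \mathsf{NL}$, which is exactly Corollary~\ref{coro:l_main}.

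\textbf{Obstacle.} Candidly, there is no genuine obstacle inside the corollary itself --- the entire mathematical weight sits in Theorem~\ref{thrm:l_main}, and behind it Theorem~\ref{thrm:l_conv}. The only care needed is to confirm that the definition of $\mathsf{NL/poly}$ adopted in Section~\ref{subsec:l_pre} uses the standard Karp--Lipton advice convention (one advice string per input length), since the undecidable-unary-language trick relies on that convention; under any reasonable definition the argument goes through verbatim.
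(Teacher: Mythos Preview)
Your proposal is correct and follows exactly the paper's own route: the paper derives Corollary~\ref{coro:l_main} immediately from Theorem~\ref{thrm:l_main} together with the standard fact $\mathsf{NL/poly} \supsetneq \mathsf{NL}$, which you have additionally spelled out via the undecidable-unary-language argument. There is nothing to add or correct.
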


We consider Theorem~\ref{thrm:l_main} and Corollary~\ref{coro:l_main}
from three points of view below.
\\

\noindent {\bf The {\sf L} vs. {\sf NL} problem.}
Savitch's theorem~\cite{S70} shows that
$\mathrm{NSPACE}(f(n)) \subseteq \mathrm{SPACE}(f(n)^2)$ for $f(n) \geq \log n$.
While $\mathsf{PSPACE} = \mathsf{NPSPACE}$ by the theorem,
the {\sf L} vs. {\sf NL} problem is a longstanding central open problem
in computational complexity.
Corollary~\ref{coro:l_main} may give some new insight for the
{\sf L} vs. {\sf NL} problem.
Savitch's theorem means that nondeterministic computation can be
replaced by more spaces in this situation.
Corollary~\ref{coro:l_main} means that nondeterministic computation can be
replaced by advice in the situation.
\\

\noindent {\bf The {\sf L/poly} vs. {\sf NL/poly} problem.}
This is the nonuniform variant of the {\sf L} vs. {\sf NL} problem
and also a longstanding open problem in computational complexity.
Theorem~\ref{thrm:l_main} can be considered as a result related to
the {\sf L/poly} vs. {\sf NL/poly} problem.
\\

\noindent {\bf The power of large advice.}
If we consider nonuniform variant of {\sf L}, then the size of
advice is polynomial.
Therefore, {\sf L/poly} has been well studied.
Theorem~\ref{thrm:l_main} and Corollary~\ref{coro:l_main} imply
nontrivial results on the power of large advice.
\\

In Section~\ref{subsec:l_conv_p} and Section~\ref{subsec:l_main_p},
we prove Theorem~\ref{thrm:l_conv} and Theorem~\ref{thrm:l_main},
respectively.
\\

\noindent {\bf Note.}
If we wish to prove only Theorem~\ref{thrm:l_main}, then we can use
nondeterministic branching programs instead of nondeterministic
bounded width circuits.
Theorem~\ref{thrm:l_conv} is replaced by the following theorem,
and the proof of Theorem~\ref{thrm:l_conv2} is almost the same
of Theorem~\ref{thrm:l_conv}.
\begin{theorem} \label{thrm:l_conv2}
Any nondeterministic branching programs of size $s$ can be converted
to a Boolean circuit of size $2^{O(\log^2 s)}$ and width $O(\log s)$.
\end{theorem}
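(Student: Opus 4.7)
The plan is to encode acceptance of the nondeterministic branching program as a graph reachability problem and then solve it deterministically by a Savitch-style recurrence, packaged as a bounded-width circuit, in essentially the same way as the proof of Theorem~\ref{thrm:l_conv}.

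First I would view the branching program of size $s$ as a DAG $G_x$ on $s$ nodes whose edges are determined by the input $x$: at a deterministic query node, the outgoing edge is selected by the queried input bit, while at a nondeterministic node both out-edges remain enabled. With source $\sigma$ and accepting sink $\tau$, the program accepts $x$ iff $\tau$ is reachable from $\sigma$ in $G_x$. Since this reachability question is the only thing we need to answer, the rest of the argument is a purely deterministic reachability construction.

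Next I would introduce the predicate $R_k(u,v)$ meaning that there is a directed path from $u$ to $v$ in $G_x$ of length at most $2^k$, and use the Savitch recurrence
$$R_{k+1}(u,v) \;=\; \bigvee_{w}\bigl(R_k(u,w)\wedge R_k(w,v)\bigr),$$
with base case $R_0(u,v)$ an indicator of a single edge in $G_x$ that depends on at most one input bit. The overall output is $R_{\lceil\log s\rceil}(\sigma,\tau)$. Fully unfolding this recurrence produces a balanced computation of depth $O(\log s)$ in which each level branches into $s$ summands, so the total number of base-case evaluations, and hence the total number of gates, is $s^{O(\log s)}=2^{O(\log^2 s)}$.

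The main technical step, and the point at which the argument parallels the proof of Theorem~\ref{thrm:l_conv}, is to realise this recurrence as a circuit of width only $O(\log s)$. A naive recursive implementation keeps an $O(\log s)$-bit counter alive at each of the $O(\log s)$ recursion levels and thus has width $O(\log^2 s)$; the improvement to $O(\log s)$ is obtained by spatially unfolding the fan-in-$s$ disjunctions into the circuit (which is precisely what incurs the $2^{O(\log^2 s)}$ size) while reusing a single workspace of $O(\log s)$ bits that holds only the currently active pair $(u,v)$ together with a constant amount of scratch. I expect this bookkeeping to be the main obstacle: one must check that the pair $(u,v)$ at each descending level can be routed through the circuit without leaving stale information live, which amounts to a careful repetition of the routing argument from the proof of Theorem~\ref{thrm:l_conv}, with the circuit-width parameter $w$ taken to be $0$ and the BP nodes encoded in $\lceil\log s\rceil$ bits.
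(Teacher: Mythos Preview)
Your approach is essentially the paper's: both are a Savitch-style divide-and-conquer. The paper's proof simply says to redo the argument for Theorem~\ref{thrm:l_conv}, whose key step (Lemma~\ref{lem:l_conv}) cuts the computation at a middle layer, enumerates all possible interface states, and recurses on the two halves; for a branching program of size $s$ the interface state is ``which of the $\le s$ nodes am I at,'' and this is exactly your recurrence $R_{k+1}(u,v)=\bigvee_w R_k(u,w)\wedge R_k(w,v)$ unfolded $O(\log s)$ times.

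One correction to your width bookkeeping. Once you spatially unroll all the fan-in-$s$ disjunctions, the pair $(u,v)$ at every node of the recursion tree is \emph{hard-wired} into that part of the circuit; there is no need to carry it on live wires, and doing so would not by itself give $O(\log s)$ width. What actually has to be kept alive is, for each of the $O(\log s)$ outer recursion levels, a constant number of bits: the running OR accumulator over the midpoints tried so far, and the saved value of the first conjunct $R_k(u,w)$ while the second conjunct $R_k(w,v)$ is being computed. This is precisely why Lemma~\ref{lem:l_conv} adds only $2$ to the width per recursion level, and with $O(\log s)$ levels that yields width $O(\log s)$. Your phrase ``a single workspace of $O(\log s)$ bits that holds only the currently active pair $(u,v)$'' misidentifies what those bits are for; once you replace it with the accumulator picture above, the argument goes through and matches the paper.
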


\section{Preliminaries} \label{sec:pre}

The definitions in this section are used throughout this paper.

{\em Circuits} are formally defined as directed acyclic graphs.
The nodes of in-degree 0 are called {\em inputs}, and each one of them
is labeled by a variable or by a constant 0 or 1.
The other nodes are called {\em gates}, and each one of them
is labeled by a Boolean function.
The {\em fan-in} of a node is the in-degree of the node, and
the {\em fan-out} of a node is the out-degree of the node.
There is a single specific node called {\em output}.
The {\em size} of a circuit is the number of gates in the circuit.

While the gate type is critical in Section~\ref{sec:ndc}, it is not
so critical in Section~\ref{sec:bwc} and Section~\ref{sec:l}.
In Section~\ref{sec:bwc} and Section~\ref{sec:l}, we assume that
the gates are AND gates of fan-in two, OR gates of fan-in two, and NOT gates.
In Section~\ref{sec:ndc}, we use gates as follows.
We denote by $B_2$ the set of all Boolean functions
$f:\{0,1\}^2 \rightarrow \{0,1\}$.
We denote by $U_2$ the set of all Boolean functions over two variables
except for the XOR function and its complement.
A Boolean function in $U_2$ can be represented as the following form:
$$f(x,y) = ((x \oplus a) \wedge (y \oplus b)) \oplus c,$$
where $a, b, c \in \{0,1\}$.
A {\em $U_2$-circuit} is a circuit in which each gate has fan-in 2
and is labeled by a Boolean function in $U_2$.

When we consider the {\em width} of a circuit, we temporarily
insert COPY gates to the circuit.
A COPY gate is a dummy gate which simply outputs its input.
A circuit is {\em layered} if its set of gates can be partitioned
into subsets called {\em layers} such that every edge in the circuit
is between adjacent layers.
Note that every circuit is naturally converted to a layered circuit
by inserting COPY gates to each edge which jumps over some layers.
The width of a layer is the number of gates in the layer.
The width of a circuit is the maximum width of all layers in
the circuit.

A {\em nondeterministic circuit} is a circuit with {\em actual inputs}
$(x_1, \ldots, x_n) \in \{0,1\}^n$ and some further inputs
$(y_1, \ldots, y_m) \in \{0,1\}^m$ called {\em guess inputs}.
A nondeterministic circuit computes a Boolean function $f$ as follows:
For $x \in \{0,1\}^n$, $f(x)=1$ iff there exists a setting of the guess inputs
$\{y_1, \ldots, y_m\}$ which makes the circuit output 1.
We call a circuit without guess inputs
a {\em deterministic circuit} to distinguish it
from a nondeterministic circuit.

To the best of our knowledge, this is the first paper which consider
nondeterministic bounded width circuits, and we need to notice that
the appearance of guess inputs is a sensitive problem to the
computational power of circuits.
We restrict the number of nodes labeled by a guess input to at most one,
and we do not restrict the number of nodes labeled by an actual input.
(The restriction makes no sense if the width is unbounded.)
This is a natural restriction, since it make nondeterministic bounded width
circuits correspond to nondeterministic bounded space computation
such as {\sf NL/poly}, and we actually use this property
in Section~\ref{sec:l}.

\section{Bounded width circuits} \label{sec:bwc}

\subsection{Preliminaries}

A {\em nondeterministic branching program} is a directed acyclic graph.
The nodes of non-zero out-degree are called {\em inner nodes} and labeled
by a variable.
The nodes of out-degree 0 are called {\em sinks} and labeled by 0 or 1.
For each inner node, outgoing edges are labeled by 0 or 1.
There is a single specific node called the {\em start node}.
The output of the nondeterministic branching program is 1 if and only if
at least one path leads to 1 sink.
The {\em size} of branching programs is the number of its nodes.
A branching program is {\em syntactic read-$k$-times} if each variable
appears at most $k$ times in each path.

To prove Theorem~\ref{thrm:bwc_main}, we use the following theorem.

\begin{theorem}[\cite{BRS93}] \label{thrm:bwc_brs}
There is a Boolean function $f$ such that every nondeterministic
syntactic read-$k$-times branching program for computing $f$ has
size $\exp(\Omega(\frac{n}{4^kk^3}))$.
\end{theorem}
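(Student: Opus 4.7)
The plan is to prove this size lower bound by the classical rectangle-covering method adapted to nondeterministic syntactic read-$k$-times branching programs. I would begin by fixing an explicit hard function $f:\{0,1\}^n \to \{0,1\}$ of algebraic/combinatorial origin — a natural candidate is the characteristic function of a system of bilinear equations over $\mathrm{GF}(2)$, or a suitable graph/hypergraph property encoded by the input bits — chosen so that $f^{-1}(1)$ has density bounded below by a constant while being ``pseudorandomly'' spread across structured subsets of the cube, in a sense to be made precise in the density bound below.

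Next, I would convert a nondeterministic syntactic read-$k$-times branching program $P$ of size $s$ into a cover of $f^{-1}(1)$ by so-called \emph{$k$-rectangles}. Each accepting computation path from the source to a $1$-sink reads each variable at most $k$ times. Cutting each path at $O(k)$ carefully chosen checkpoints (for example, whenever the set of already-read variables first exceeds prescribed thresholds, chosen so that between checkpoints a controlled number of new variables is read) decomposes it into $O(k)$ segments. The set of inputs consistent with a fixed sequence of segment-behaviours forms a $k$-rectangle: a subset of $\{0,1\}^n$ specified by a partition of the coordinates into $O(k)$ classes together with an independent Boolean restriction on each class. A counting argument over the possible checkpoint node-sequences then yields a cover of $f^{-1}(1)$ by at most $s^{O(k)}$ such rectangles.

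The main obstacle is the density bound: proving that for the chosen $f$, every $k$-rectangle $R$ contained in $f^{-1}(1)$ has measure at most $\exp(-\Omega(n/(4^k k^3))) \cdot 2^n$. This is typically accomplished by a hybrid/switching argument or a Fourier- or character-theoretic estimate, exploiting the algebraic mixing of $f$ across the blocks of the partition; the $4^k$ factor reflects a union bound over partition shapes and orientations, and the $k^3$ factor arises from finer enumeration over block sizes and segment boundaries. Combining the cover-size bound $s^{O(k)}$ with the per-rectangle density bound and the constant lower bound on $|f^{-1}(1)|/2^n$ forces $s^{O(k)} \ge \exp(\Omega(n/(4^k k^3)))$, which on taking logarithms and absorbing a factor of $k$ into the constant yields the desired $s = \exp(\Omega(n/(4^k k^3)))$. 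The technically hardest step is verifying the density bound for the explicit $f$ with the precise dependence on $k$; the decomposition and counting steps are essentially bookkeeping once the right checkpoint scheme is chosen.
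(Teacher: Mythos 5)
You should first note that the paper contains no proof of this statement to compare against: Theorem~\ref{thrm:bwc_brs} is imported as a black box from \cite{BRS93}, and the present paper only uses it. So the question is whether your sketch would actually reconstruct the Borodin--Razborov--Smolensky argument. At the level of the framework, it would: covering $f^{-1}(1)$ by at most $s^{O(k)}$ generalized rectangles obtained by cutting accepting computation paths of a syntactic read-$k$-times program into $O(k)$ segments, and then playing this cover off against a bound on the measure of any rectangle contained in $f^{-1}(1)$, is indeed the BRS paradigm, and your final arithmetic (cover size versus density versus $|f^{-1}(1)|/2^n$) is the right way to conclude.

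However, as a proof the proposal has a genuine gap, and it sits exactly where you yourself locate ``the main obstacle.'' The explicit function is never fixed --- you list candidates (``a system of bilinear equations over $\mathrm{GF}(2)$, or a suitable graph/hypergraph property'') rather than defining one --- and the per-rectangle density bound $\exp(-\Omega(n/(4^k k^3)))$ is only asserted to be ``typically accomplished by a hybrid/switching argument or a Fourier- or character-theoretic estimate.'' That density lemma for a concrete $f$, with the stated dependence on $k$, is the entire technical content of the theorem; without it nothing is proved, and the lower bound $\Omega(n/(4^kk^3))$ cannot be recovered from the bookkeeping steps alone. Relatedly, your explanation of the quantitative factors is guesswork: attributing the $4^k$ to ``a union bound over partition shapes and orientations'' and the $k^3$ to ``finer enumeration over block sizes'' is not a derivation, and it obscures the real issue, namely that the path decomposition must guarantee that the rectangles are over partitions with sufficiently large, balanced blocks (otherwise the density lemma gives nothing), which is where the exponential-in-$k$ loss actually enters. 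To turn this into a proof you would need to (i) fix the BRS hard function explicitly, (ii) prove the rectangle decomposition with explicit block-size guarantees and an explicit count of rectangles in terms of $s$ and $k$, and (iii) prove the density bound for 1-rectangles of that function with the claimed parameters; none of these is carried out in the proposal.
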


In this paper, we denote by $f_{BRS}$ the Boolean function of the theorem.

\subsection{Proof of Theorem~\ref{thrm:bwc_main}}

Firstly, we define the concept of {\em read-$k$-times} for circuits.
A variable is read-$k$-times in a circuit if the number of nodes labeled
by the variable is at most $k$.
A circuit is read-$k$-times if every actual input is read-$k$-times
in the circuit.
(Note that ``read-$k$-times'' makes a sense since the width is bounded.)

\begin{lemma} \label{lem:bwc_conv}
Any nondeterministic read-$k$-times circuit of size $s$ and width $w$
can be converted to a nondeterministic syntactic read-$k$-times
branching program of size $4^ws$.
\end{lemma}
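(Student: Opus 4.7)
The plan is to simulate the layered version of the circuit by a branching program whose state after each layer records the Boolean values of the (at most) $w$ wires crossing that layer.

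First, I would convert the given circuit to layered form by inserting COPY gates on every edge that skips layers; this does not change the width, and since the depth of a size-$s$ circuit is at most $s$, the layered version has at most $L\le s$ layers, each containing at most $w$ gates. The original read-$k$-times property is unaffected, because COPY gates are unlabeled.

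Second, I would build the branching program in $L$ consecutive ``blocks'', one per layer transition. The invariant is that after block $\ell$ the BP sits in a ``register'' node indexed by a wire configuration $c_\ell\in\{0,1\}^w$ at layer $\ell$; there are $2^w$ such register nodes per layer. Inside block $\ell$, from each entry state $c_\ell$ I attach a small sub-BP that processes the gates of layer $\ell+1$ one at a time. A variable-input gate labeled $x_i$ becomes a BP node labeled $x_i$ with its two outgoing $0/1$-edges routed to the next partial state; a guess-input gate becomes a pure nondeterministic fork with two unlabeled edges; an AND/OR/NOT/COPY gate contributes no new node, because its output is already determined by the bits recorded in the current state. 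The state of the sub-BP is naturally indexed by (entry configuration $c_\ell$, bits of layer $\ell+1$ read so far), so the block has at most $2^w\cdot 2^w=4^w$ nodes, and summing over $L\le s$ blocks gives total size at most $4^w s$.

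Third, I would verify the syntactic read-$k$-times property. Every BP node carrying a variable label $x$ was created to simulate a specific input node labeled $x$ in the layered circuit. Any source-to-sink path in the BP visits each layer exactly once, and inside each layer it traverses each variable-input node of that layer at most once. Hence along any path the number of $x$-labeled nodes visited is at most the total number of input nodes of the circuit labeled $x$, which is at most $k$ by the read-$k$-times assumption.

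The main place where care is needed is the size accounting: one has to use that the number of layers is at most $s$ (via depth $\le$ size), rather than the total number of gates in the layered form, which could be as large as $ws$. Guess inputs pose no difficulty, since they become pure nondeterministic branches that do not count toward any variable's read bound, and the ``at most one guess-input node per variable'' restriction in the circuit model is automatically inherited by the BP.
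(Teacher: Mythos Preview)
Your proposal is correct and takes essentially the same approach as the paper. The paper's own proof is a three-line sketch (``for each of the at most $2^w$ wire configurations per layer prepare one node; the natural conversion suffices''), and your write-up simply fleshes out that sketch, including the layer-count bound $L\le s$ via depth $\le$ size and the observation that each circuit input node is traversed at most once along any BP path, which yields the syntactic read-$k$-times property.
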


\begin{proof}
The number of values (0 or 1) from a layer to the next layer is
at most $w$.
For each at most $2^w$ combination of 0 and 1, we prepare one node
in the constructed branching program.
Natural conversion from the circuit to the branching program is
enough to prove the lemma. % todo
\end{proof}

If the size of a circuit is $s$, then the number of nodes labeled by
actual inputs is at most $s+1$.
Thus, the average number of nodes labeled by each actual input is
at most $\frac{s+1}{n}$.
However, the circuit is not necessarily read-$\frac{s+}{n}$-times.
This is the most difficult point of this proof.
We resolve the difficulty by the definition of a Boolean function $f$.

We define $f(x_1, x_2, \ldots, x_{2n}, z_1, z_2, \ldots, z_{2n})$ as follows. 
If $\sum_{i=1}^{2n} z_i \neq n$, then $f = 0$.
Otherwise, $f = f_{BRS}$ and the $n$ input variables are $x_i$'s such that
$z_i = 1$.

\begin{proof}[Proof of Theorem~\ref{thrm:bwc_main}]
Let $C$ be a nondeterministic circuit computing
$f(x_1, x_2, \ldots, x_{2n}, z_1, z_2, \ldots, z_{2n})$, and let $s$ and $w$
be the size and the width of $C$, respectively.
We choose $n$ variables from $x_1, \ldots, x_{2n}$ so as every choosed
variable is read-$\frac{s}{n}$-times in $C$.
We assign $1$ to $z_i$ iff $x_i$ has been chosen for $1 \leq i \leq 2n$.
We assign an arbitrary value to $x_i$ which has not been chosen
for $1 \leq i \leq 2n$.
Let $C'$ be the obtained read-$\frac{s}{n}$-times circuit, and let $s'$ and $w'$
be the size and the width of $C'$, respectively.
$C'$ computes $f_{BRS}$, and $s' \leq s$ and $w' \leq w$.
By Lemma~\ref{lem:bwc_conv} and Theorem~\ref{thrm:bwc_brs},
\begin{eqnarray*}
  4^{w'}s'       & = & \exp(\Omega(\frac{n}{4^{\frac{s}{n}}(\frac{s}{n})^3})) \\
  4^ws          & = & \exp(\Omega(\frac{n}{4^{\frac{s}{n}}(\frac{s}{n})^3})) \\
  2w + \log_2 s & = & \Omega(\frac{n^4}{4^{\frac{s}{n}}s^3}) \\
  w             & = & \Omega(\frac{n^4}{4^{\frac{s}{n}}s^3}) - \frac{\log_2 s}{2}
\end{eqnarray*}
\end{proof}

\subsection{Satisfiability algorithms} \label{subsec:bwc_sat}

Recently, a satisfiability algorithm for nondeterministic syntactic
read-$k$-times branching programs has been provided.

\begin{theorem}[\cite{NST17}] \label{thrm:bwc_nst}
There exists a deterministic and polynomial space algorithm for
a nondeterministic and syntactic read-$k$-times BP SAT with $n$ variables
and $m$ edges that runs in time
$O(\mathrm{poly}(n, m^{k^2}) \cdot 2^{(1 - 4^{-k-1})n})$.
\end{theorem}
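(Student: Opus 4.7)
The plan is to design a DPLL-style recursive SAT algorithm on top of the branching program, with structural simplifications tailored to the syntactic read-$k$-times hypothesis. The baseline DPLL runs in time $O(\mathrm{poly}(m)\cdot 2^n)$: pick a variable $x$, recurse on $x=0$ and $x=1$, and in each branch substitute $x$ into the BP (delete edges whose label disagrees with the chosen value, contract edges whose label agrees). The whole task is to shave the exponent by $4^{-k-1}n$, which must come from showing that in a constant fraction of recursive calls one can branch more cheaply than the factor of $2$ incurred by naive splitting.

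I would organize the algorithm around a heavy/light dichotomy on variables. Call a variable \emph{heavy} in the current subprogram if it labels at least some threshold $t(k)$ edges, and \emph{light} otherwise. In the heavy case, branching on $x$ simultaneously removes at least $t(k)$ edges from both child subproblems, so after only a controlled number of heavy branchings the residual BP has shrunk enough to be finished by brute force. In the light case every variable has few occurrences, which together with the syntactic read-$k$-times constraint sharply restricts the set of assignments that can be consistent with any single accepting path; one then enumerates over only a carefully chosen subset $S\subset\{x_1,\ldots,x_n\}$, loops over $\{0,1\}^S$, and, for each partial assignment, decides satisfiability of the residual BP deterministically in polynomial space by a standard layered reachability computation.

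The $4^{-k-1}$ savings should fall out of a potential/measure argument: choose the threshold $t(k)$ and the branching set $S$ so that each recursion step charges at least $1-4^{-k-1}$ units per processed variable and at most $1$ unit per other variable, and sum over the recursion tree. The $\mathrm{poly}(n,m^{k^2})$ overhead is most naturally explained by bookkeeping: for each variable one tracks up to $k$ occurrence positions along any path, giving $O(m^{k^2})$ ``local configurations'' carried through the reachability step, while polynomial space is preserved by storing only the partial assignment plus a pointer back to the unchanged original BP during the recursion.

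The main obstacle will be the light-variable case. Heavy branching is a standard technique; the real gain — converting the qualitative statement ``no variable appears too often'' into a concrete exponential-in-$n$ speedup of the specific form $2^{-4^{-k-1}n}$ — is where the syntactic read-$k$-times hypothesis has to be used in a delicate combinatorial way (probably via a counting argument over paths that bounds the number of assignment patterns compatible with the $\leq k$ allowed occurrences of each variable), and where care is required to keep the working space polynomial even while enumerating over the structured subset $S$.
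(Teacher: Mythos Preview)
This theorem is not proved in the paper at all: it is quoted verbatim from \cite{NST17} and used as a black box to derive the subsequent SAT result for bounded width circuits. There is therefore no ``paper's own proof'' to compare your proposal against; the present paper contributes only the reduction (Lemma~\ref{lem:bwc_conv} plus the variable-selection trick) that feeds a bounded width circuit into the cited algorithm.

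As for the proposal itself, it is a reasonable high-level \emph{shape} for a branching-program SAT algorithm, but it is not yet a proof. The heavy/light split and the DPLL skeleton are standard scaffolding; you yourself flag that the entire quantitative content --- turning ``every variable is light'' plus ``syntactic read-$k$'' into the precise saving $2^{-4^{-k-1}n}$ --- is the ``main obstacle,'' and you do not supply the combinatorial lemma that would deliver it. The phrases ``probably via a counting argument over paths'' and ``should fall out of a potential/measure argument'' are placeholders, not arguments: without an explicit choice of the threshold $t(k)$, the branching set $S$, and a lemma bounding the number of assignments consistent with the restricted BP, there is no way to verify the claimed exponent or the $m^{k^2}$ overhead. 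If you want to actually reconstruct the result, you need to consult \cite{NST17} for that missing core lemma; the surrounding paper gives you no help with it.
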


In a similar outline to the proof of the lower bound, a satisfiability
algorithm for nondeterministic bounded width circuits is provided.

\begin{theorem}
There exists a deterministic and polynomial space algorithm for
a nondeterministic bounded width circuit SAT with $n$ actual inputs,
size $s$ and width $w$ that runs in time
$O(\mathrm{poly}(n, m^{k^2}) \cdot 2^{(1 - 4^{-k-\frac{3}{2}})n})$,
where $m = 4^ws$ and $k = \lceil \frac{2s}{n} \rceil$. % —vŠm"F
\end{theorem}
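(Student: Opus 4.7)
The plan is to follow the outline of Theorem~\ref{thrm:bwc_main}'s proof: convert the circuit to a nondeterministic syntactic read-$k$-times branching program via Lemma~\ref{lem:bwc_conv}, then invoke the NST satisfiability algorithm (Theorem~\ref{thrm:bwc_nst}). The obstacle, just as in the lower bound, is that an arbitrary circuit of size $s$ need not be read-$k$-times for any useful $k$, even though its total input-node count is at most $s+1$ and hence the average occurrence per actual input is only $(s+1)/n$. For the lower bound the paper resolved this by building a selector into the hard function itself; for SAT we do not have that freedom, so instead we brute-force the actual inputs that occur too often.

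Concretely, first classify each actual input as \emph{heavy} if it labels more than $k = \lceil 2s/n \rceil$ input nodes and \emph{light} otherwise. A counting argument using the $s+1$ bound on input-node count shows that at most $n/2$ variables are heavy, so at least $n/2$ are light. The algorithm enumerates all $2^h$ assignments to the $h \leq n/2$ heavy variables; for each assignment, substitute the constants into the circuit to obtain a nondeterministic circuit on $n' = n - h$ light variables of size $\leq s$ and width $\leq w$ that is, by construction, read-$k$-times. Apply Lemma~\ref{lem:bwc_conv} to turn it into a nondeterministic syntactic read-$k$-times branching program of size $m = 4^w s$, and call the NST algorithm on that branching program; accept the original instance iff some inner call reports satisfiability.

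Correctness is immediate, since the original nondeterministic circuit is satisfiable iff some heavy assignment leaves a satisfiable residual. For the time, the inner NST call uses $O(\mathrm{poly}(n', m^{k^2}) \cdot 2^{(1-4^{-k-1})n'})$ time and polynomial space, so the total is
\[
2^h \cdot O\bigl(\mathrm{poly}(n, m^{k^2}) \cdot 2^{(1-4^{-k-1})n'}\bigr) = O\bigl(\mathrm{poly}(n, m^{k^2}) \cdot 2^{n - 4^{-k-1}n'}\bigr),
\]
which is maximized at $n' = n/2$; using $4^{-k-1}/2 = 4^{-k-3/2}$ yields the stated bound $O(\mathrm{poly}(n, m^{k^2}) \cdot 2^{(1-4^{-k-3/2})n})$. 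Polynomial space is preserved since heavy assignments are processed one at a time and Theorem~\ref{thrm:bwc_nst} itself runs in polynomial space.

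The main obstacle, exactly as in the lower bound, is taming variables that appear many times in the circuit; brute-forcing the heavy ones is what bridges the gap between arbitrary circuits and the read-$k$-times setting in which NST applies, and the $2^{n/2}$ enumeration cost is absorbed cleanly by the exponent calculation above.
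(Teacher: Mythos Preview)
Your proof is correct and follows essentially the same approach as the paper: identify the at most $n/2$ actual inputs that appear more than $k=\lceil 2s/n\rceil$ times, brute-force over those, and on each residual read-$k$-times circuit apply Lemma~\ref{lem:bwc_conv} followed by the NST algorithm of Theorem~\ref{thrm:bwc_nst}. The paper fixes exactly $n/2$ variables (padding with some light ones if necessary) so that $n'=n/2$ always, whereas you brute-force only the genuinely heavy variables and then observe the bound is worst at $n'=n/2$; the resulting running time and the underlying idea are identical.
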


\begin{proof}
Let $C$ be a nondeterministic bounded width circuit.
We choose $n/2$ variables from $x_1, \ldots, x_n$ so as every choosed
variable is read-$\lceil \frac{2s}{n} \rceil$-times in $C$.
For $n/2$ variables which have not been chosen, we execute
the brute-force search.
Then, we use Lemma~\ref{lem:bwc_conv}, and execute the algorithm
in Theorem~\ref{thrm:bwc_nst}.
\end{proof}

\section{The power of nondeterministic circuits} \label{sec:ndc}

\subsection{Preliminaries}

The parity function of $n$ inputs $x_1, \ldots, x_n$, denoted by Parity$_n$,
is 1 iff $\sum x_i \equiv 1 \pmod{2}$.
Circuits are $U_2$-circuits throughout Section~\ref{sec:ndc}.

\subsubsection{the gate elimination method} \label{subsubsec:method}

In our proof, we need the gate elimination method and the result
by Schnorr using the method.
In this subsection, we have a quick look at them.

Consider a gate $g$ which is labeled by a Boolean function in $U_2$.
Recall that any Boolean function in $U_2$ can be represented as
the following form:
$$f(x,y) = ((x \oplus a) \wedge (y \oplus b)) \oplus c,$$
where $a, b, c \in \{0,1\}$.
If we fix one of two inputs of $g$ so that $x = a$ or $y = b$,
then the output of $g$ becomes a constant $c$.
In such case, we call that $g$ is {\em blocked}.

\begin{theorem}[Schnorr~\cite{S74}] \label{thrm:par}
$$size^{\rm dc}({\rm Parity}_n) = 3(n-1).$$
\end{theorem}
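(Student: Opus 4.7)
The plan is to prove the matching upper and lower bounds of $3(n-1)$ separately.

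For the upper bound, I would build $\mathrm{Parity}_n$ as a chain of binary XORs $x_1 \oplus x_2 \oplus \cdots \oplus x_n$, implementing each binary XOR inside $U_2$ by $a \oplus b = (a \wedge \neg b) \vee (\neg a \wedge b)$. The two conjunctions with one negated input and the disjunction all lie in $U_2$ (each is obtained from the normal form $((x \oplus a)\wedge(y \oplus b)) \oplus c$ by an appropriate choice of $a,b,c$), so each XOR costs exactly three $U_2$-gates and the whole chain uses $3(n-1)$ gates.

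For the lower bound I would use the gate elimination method, proving by induction on $n$ that any $U_2$-circuit computing $\mathrm{Parity}_n$ has at least $3(n-1)$ gates. The base case $n=1$ is vacuous and $n=2$ is a short direct check that XOR lies outside $U_2$ and so cannot be computed with fewer than three gates. For the inductive step, given an optimal circuit $C$ for $\mathrm{Parity}_n$ with $n \geq 3$, I would exhibit a variable $x_i$ and a constant $c \in \{0,1\}$ such that fixing $x_i \gets c$ and simplifying removes at least three gates from $C$. This suffices because $\mathrm{Parity}_n$ with $x_i$ fixed to $c$ equals $\mathrm{Parity}_{n-1}$ or its complement, and complementing the output of a $U_2$-circuit is free (absorbed into the output gate via its $c$-bit), so by induction the simplified circuit has at least $3(n-2)$ gates, giving $|C| \geq 3(n-1)$.

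To locate the three eliminable gates, I would pick a topologically bottommost gate $g$, whose two inputs must both be input variables, say $x_i$ and $x_j$; moreover $x_i \ne x_j$, since otherwise $g$ would be a unary function of $x_i$ and could be removed by rewiring, contradicting optimality. Setting $x_i$ to its blocking value for $g$ makes $g$ output a constant, eliminating $g$. Because $\mathrm{Parity}_n$ depends on every variable, $g$ has at least one successor $g'$, which now sees a constant input and therefore collapses either to a constant or to a unary function of its remaining input, absorbable into its own successor — a second eliminated gate. For the third gate I would case-split on the fan-outs of $x_i$, $g$, and $g'$: if $x_i$ feeds a gate other than $g$ in $C$, that gate also simplifies under the restriction; otherwise I would follow the successor chain from $g'$ or exploit an additional successor of $g$, obtaining one more eliminable gate in each subcase.

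The main obstacle is precisely this last step, forcing a delicate case analysis that accounts for every configuration of fan-outs and for how the blocking values propagate through cascaded $U_2$-gates so that three \emph{distinct} gates are eliminated in every case. The $U_2$ restriction is essential here: over $B_2$ the XOR function is itself a gate, so the bound collapses, and any successful argument must use the structural fact that every $U_2$-gate is blockable to a constant via each of its inputs.
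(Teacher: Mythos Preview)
Your overall approach matches the paper's: upper bound via a chain of three-gate XORs, lower bound by gate elimination showing that some restriction $x_i \gets c$ kills at least three gates. Where you diverge is in how you locate the third gate, and your stated fallback in the ``otherwise'' branch does not work.

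The paper avoids your case split entirely by arguing up front that $x_i$ \emph{must} feed a second gate $g_2$ besides the top gate $g_1$. The reason: if $x_i$ fed only $g_1$, then setting $x_j$ (the \emph{other} input of $g_1$, not $x_i$) to its blocking value would make $g_1$ constant, and the circuit's output would become independent of $x_i$---impossible for Parity. Hence your ``otherwise'' branch is vacuous. With $g_2$ in hand, the three gates eliminated by setting $x_i$ to block $g_1$ are $g_1$, its successor $g_3$, and $g_2$; if $g_2 = g_3$ happen to coincide, that gate now has both inputs determined and becomes constant, so its own successor supplies the third elimination.

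Your proposed fallback (``follow the successor chain from $g'$\,'') fails in the sub-case where $g'$ becomes a non-constant unary function of its remaining input: absorbing $g'$ into its successor is already your second elimination, and nothing further cascades. So to close your case analysis you would ultimately have to discover the fan-out-of-$x_i$ argument above anyway. A minor point: a separate base case $n=2$ is unnecessary, since the same elimination step already yields $size^{\rm dc}(\mathrm{Parity}_2) \geq 3 + size^{\rm dc}(\mathrm{Parity}_1) = 3$.
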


\begin{proof}
Assume that $n \geq 2$.
Let $C$ be an optimal deterministic $U_2$-circuit computing Parity$_n$.
Let $g_1$ be a top gate in $C$, i.e.,  whose two inputs are connected from
two inputs $x_i$ and $x_j$, $1 \leq i, j \leq n$.
Then, $x_i$ must be connected to another gate $g_2$, since, if $x_i$ is
connected to only $g_1$, then we can block $g_1$ by an assignment
of a constant to $x_j$ and the output of $C$ becomes independent from $x_i$,
which contradicts that $C$ computes Parity$_n$.
By a similar reason, $g_1$ is not the output of $C$.
Let $g_3$ be a gate which is connected from $g_1$.
See Figure~\ref{fig:sch}.

\begin{figure}[t]
  \begin{center}
    \includegraphics[scale=0.6]{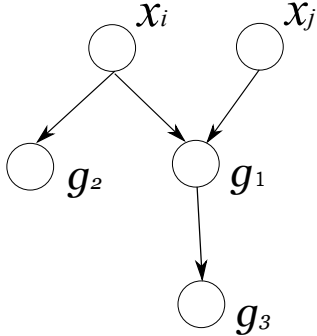}
    \caption{Proof of Theorem~\ref{thrm:par}}
    \label{fig:sch}
  \end{center}
\end{figure}

We prove that we can eliminate at least three gates from $C$ by an assignment
to $x_i$.
We assign a constant 0 or 1 to $x_i$ such that $g_1$ is blocked.
Then, we can eliminate $g_1$, $g_2$ and $g_3$.
If $g_2$ and $g_3$ are the same gate, then the output of $g_2$ ($= g_3$) 
becomes a constant, which means that $g_2$ ($= g_3$) is not the output
of $C$ and we can eliminate another gate which is connected from
$g_2$ ($= g_3$).
Thus, we can eliminate at least three gates and the circuit come to compute
Parity$_{n-1}$ or $\neg$Parity$_{n-1}$.
For deterministic circuits, it is obvious that 
$size^{\rm dc}({\rm Parity}_{n-1}) = size^{\rm dc}(\neg{\rm Parity}_{n-1})$.
Therefore,
\begin{eqnarray*}
size^{\rm dc}({\rm Parity}_n) & \geq   & size^{\rm dc}({\rm Parity}_{n-1}) + 3 \\
                            & \vdots & \\
                            & \geq   & 3(n-1).
\end{eqnarray*}

$x \oplus y$ can be computed with three gates by the following form:
$$(x \wedge \neg y) \vee (\neg x \wedge y).$$
Therefore, $size^{\rm dc}({\rm Parity}_n) \leq 3(n-1)$.
\end{proof}

\subsection{Nondeterministic selecting} \label{subsec:ndsel}

In this subsection, we describe our idea of the proof.
We call the key idea nondeterministic selecting.

Let $f':\{0,1\}^{\sqrt n} \to \{0,1\}$, and
$$f = \bigvee_{i=0}^{\sqrt n -1}f'(x_{\sqrt n \cdot i + 1},
  x_{\sqrt n \cdot i + 2}, \ldots, x_{\sqrt n \cdot i + \sqrt n}).$$
Nondeterministic circuits can compute $f$ efficiently.
We construct a nondeterministic circuit $C$ computing $f$ as follows.
Firstly, we select $\sqrt n$ inputs nondeterministically.
More precisely, we construct a selector circuit $C'$ which outputs
$x_{\sqrt n \cdot i + 1}, x_{\sqrt n \cdot i + 2}, \ldots, x_{\sqrt n \cdot i + \sqrt n}$
for each $i$, $0 \leq i \leq \sqrt n - 1$, when guess inputs of $C$
are assigned to an assignment.
Then, one circuit $C''$ computing $f'$ is enough in $C$.
$\sqrt n$ variables of the output of $C'$ are connected to the input
of $C''$.
It is not difficult to confirm that $C$ computes $f$ by the definition
of nondeterministic circuits.

On the other hand, a trivial construction of deterministic circuits
computing $f$ needs $\sqrt n$ circuits computing $f'$.
Note that it is a complicated problem (called a direct sum) whether
$\sqrt n$ circuits are needed.
In our proof of Theorem~\ref{thrm:ndc_main}, we choose the parity
function as $f'$ so that we can prove the large lower bound
of $size^{\rm dc}(f)$.

\subsection{Proof of Theorem~\ref{thrm:ndc_main}}

To prove Theorem~\ref{thrm:ndc_main}, we let
$$f = \bigvee_{i=0}^{\sqrt n -1}{\rm Parity}_{\sqrt n}(x_{\sqrt n \cdot i + 1},
  x_{\sqrt n \cdot i + 2}, \ldots, x_{\sqrt n \cdot i + \sqrt n}),$$
and prove two lemmas.

\begin{lemma} \label{lem:ndc_ub}
$size^{\rm ndc}(f) \leq 2n + o(n)$.
\end{lemma}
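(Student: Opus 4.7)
The plan is to follow the nondeterministic selecting strategy of Section~\ref{subsec:ndsel}. I will guess a block index $i_0\in\{0,1,\ldots,\sqrt n-1\}$ using $k=\lceil\log_2\sqrt n\rceil$ guess inputs $y_1,\ldots,y_k$, decode the guess into a one-hot indicator vector $(z_0,\ldots,z_{\sqrt n-1})$, route the selected block onto $\sqrt n$ lines $v_1,\ldots,v_{\sqrt n}$ via an AND--OR selector, and output $\bigoplus_{j=1}^{\sqrt n} v_j$ using Schnorr's parity circuit from Theorem~\ref{thrm:par}.

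Concretely, the circuit has three stages. (i) \emph{Decoder}: for each $i\in\{0,\ldots,2^k-1\}$, compute $z_i=\bigwedge_{\ell=1}^{k}y_\ell^{i_\ell}$, where $y_\ell^0=\bar y_\ell$, $y_\ell^1=y_\ell$, and $i_\ell$ is the $\ell$-th bit of $i$; because the $U_2$ template $((x\oplus a)\wedge(y\oplus b))\oplus c$ absorbs input negations, each such conjunction costs $k-1$ gates, so the decoder uses $O(\sqrt n\log n)=o(n)$ gates in total. (ii) \emph{Selector}: for each $j\in\{1,\ldots,\sqrt n\}$ set $v_j=\bigvee_{i=0}^{\sqrt n-1}(z_i\wedge x_{\sqrt n\cdot i+j})$, which uses $\sqrt n$ ANDs and $\sqrt n-1$ ORs per $j$, for a total of $\sqrt n(2\sqrt n-1)=2n-\sqrt n$ gates. (iii) \emph{Parity}: compute $\bigoplus_{j=1}^{\sqrt n}v_j$ with $3(\sqrt n-1)$ gates by Theorem~\ref{thrm:par}.

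For correctness, any guess assignment with binary value $i_0\in\{0,\ldots,2^k-1\}$ produces $z_{i_0}=1$ and $z_i=0$ otherwise; hence $v_j=x_{\sqrt n\cdot i_0+j}$ when $i_0<\sqrt n$, and $v_j=0$ for all $j$ when $i_0\ge\sqrt n$, so the circuit outputs the parity of block $i_0$ in the first case and $0$ in the second. The nondeterministic OR across guess assignments therefore equals $\bigvee_{i_0=0}^{\sqrt n-1}p_{i_0}=f$, where $p_{i_0}$ denotes the parity of block $i_0$. The restriction that each guess input labels at most one node is respected, since every $y_\ell$ lives at a single input node whose fan-out feeds the decoder. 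Summing the three stages gives $o(n)+(2n-\sqrt n)+o(n)=2n+o(n)$.

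The only delicate point is the constant $2$: a naive selector built from $\sqrt n$ independent $\sqrt n$-to-$1$ $U_2$-multiplexers would cost about $3n$ gates (three gates per 2-to-$1$ $U_2$-MUX), giving only a $3n+o(n)$ bound and losing the separation from Theorem~\ref{thrm:par} that is needed to prove Theorem~\ref{thrm:ndc_main}. The AND--OR selector above avoids this by sharing the decoded bits $z_0,\ldots,z_{\sqrt n-1}$ across all $\sqrt n$ output positions, so that each pair $(z_i,x_{\sqrt n\cdot i+j})$ contributes only two gates (one AND and one OR) to the selector.
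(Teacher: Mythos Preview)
Your proposal is correct and follows exactly the strategy the paper sketches: guess a block index with $\lceil\log_2\sqrt n\rceil$ guess inputs, build a selector of cost $2n+o(n)$, and feed its $\sqrt n$ outputs into a single parity circuit of cost $3(\sqrt n-1)=o(n)$. The paper's own proof asserts the $2n+o(n)$ selector bound without spelling out the construction; your decoder plus AND--OR selector is precisely the natural way to realize it, and your explicit accounting ($o(n)$ for the decoder, $2n-\sqrt n$ for the selector proper) substantiates what the paper leaves implicit.
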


\begin{proof} % todo
We construct a nondeterministic circuit computing $f$ as mentioned
in Section~\ref{subsec:ndsel}.
We use $\lceil \log \sqrt n \rceil$ guess inputs.
The number of gates in the selector circuit is $2n + o(n)$.
The number of gates in one circuit computing ${\rm Parity}_{\sqrt n}$
is $o(n)$ by Theorem~\ref{thrm:par}.
\end{proof}

\begin{lemma} \label{lem:ndc_lb}
$size^{\rm dc}(f) = 3n - o(n)$.
\end{lemma}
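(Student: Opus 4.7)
My plan is to prove the upper and lower bounds separately.

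For the upper bound, I would use the natural construction: compute each $\mathrm{Parity}_{\sqrt n}$ on its block using Schnorr's optimal circuit of $3(\sqrt n - 1)$ gates (from Theorem~\ref{thrm:par}), then OR together the $\sqrt n$ results using $\sqrt n - 1$ gates. This gives a total of $3n - 2\sqrt n - 1 = 3n - o(n)$ gates.

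For the lower bound, I would extend Schnorr's gate elimination argument. Let $C$ be an optimal $U_2$-circuit for $f$, and inductively restrict $C$ by fixing one (or occasionally two) variables at a time, maintaining the invariant that the restricted function $f|_\rho$ has the form $\bigvee_i (P_i \oplus c_i)$ — a disjunction of offset parities on the remaining blocks — and is non-constant. At each step, I would pick a top gate $g_1$ of $C|_\rho$ with inputs $x_i, x_j$. Following Schnorr's reasoning, I would argue that some input of $g_1$ has fan-out $\geq 2$: if $x_i$ had fan-out $1$, then blocking $g_1$ by fixing $x_j$ to its blocking value would make $C|$ independent of $x_i$, contradicting dependence of $f|$ on $x_i$ under the same partial fixing. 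This dependence can fail only when the fixing completes some block with XOR sum $1$ and forces $f|$ to be identically $1$; that requires $x_j$ to be the last remaining variable in its block \emph{and} its blocking value to be the block-completing-with-XOR-$1$ value. I call the situation in which this simultaneously holds for both $x_i$ and $x_j$ (in distinct blocks) the \emph{stuck} subcase.

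Outside the stuck subcase, the standard Schnorr elimination goes through: fixing one of $x_i, x_j$ to its blocking value removes $g_1$, another gate where that variable occurs, and the gate above $g_1$, for 3 gates per variable. In the stuck subcase, I would instead fix both $x_i$ and $x_j$ to their non-blocking values, which by the stuck condition coincide with the current block XOR offsets; both blocks then complete with XOR sum $0$ and disappear from $f|_\rho$, $g_1$ becomes a constant, and the gate above $g_1$ is eliminated, giving 2 gates per 2 variables.

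The key counting step is that each block can contribute at most one last-in-block variable to a stuck step, so at most $\sqrt n$ variables total are processed by stuck steps; the remaining $\geq n - \sqrt n$ variables are eliminated with $3$ gates each. This gives $|C| \geq 3(n - \sqrt n) + \sqrt n = 3n - 2\sqrt n = 3n - o(n)$. The main obstacle is precisely identifying the stuck subcase and providing the substitute elimination (which costs one gate per variable), then bounding the frequency of the stuck case through the block structure so that the cumulative per-variable deficit is only $o(n)$.
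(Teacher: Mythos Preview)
Your upper bound is fine and matches the paper's. For the lower bound, however, your case analysis has a gap. Outside your stuck subcase it can happen that exactly one of $x_i,x_j$---say $x_i$---is the last remaining variable in its block with blocking value giving parity~$1$, while $x_j$ is not. The fan-out argument (run via $x_j$) then shows $x_i$ has fan-out $\geq 2$, but you cannot fix $x_i$ to its blocking value without making $f|_\rho \equiv 1$ and destroying your invariant; meanwhile the fan-out argument for $x_j$ (run via $x_i$) fails, so $x_j$ may well have fan-out~$1$, in which case fixing $x_j$ to its blocking value eliminates only $g_1$ and the gate above it---just two gates. So ``standard Schnorr elimination goes through'' is not justified in this intermediate case. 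The gap is repairable (this bad case is tied to a last-in-block variable and hence occurs at most $\sqrt n$ times, costing only $o(n)$ gates total), but it must be identified and handled explicitly.

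The paper sidesteps the whole difficulty with a simpler rule: whenever some block is down to a single unassigned variable, immediately fix that variable to make the block's parity~$0$, and do not count any eliminated gates for that step. This happens at most $\sqrt n$ times. In every other step all active blocks have at least two free variables, so neither input of a top gate can be last-in-block; Schnorr's argument then applies verbatim (both the fan-out claim and the invariant are safe), yielding $3$ gates per variable for the remaining $n-\sqrt n$ variables and hence size $\geq 3(n-\sqrt n) = 3n - o(n)$. Your stuck-subcase machinery is thus unnecessary once you adopt this preemptive step.
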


\begin{proof}
Since $size^{\rm dc}({\rm Parity}_n) = 3(n-1)$ by Theorem~\ref{thrm:par},
$size^{\rm dc}(f) \leq 3n - o(n)$.

We prove that $size^{\rm dc}(f) \geq 3n - o(n)$. We refer the proof of
Theorem~\ref{thrm:par}.
While we eliminate at least three gates from the circuit by
an assignment to $x_i$ as the proof of Theorem~\ref{thrm:par},
we modify the proof as follows.
If $x_{\sqrt n \cdot i + 1}, x_{\sqrt n \cdot i + 2}, \ldots, x_{\sqrt n \cdot i + \sqrt n}$
have been assigned except one variable for some $i$,
$0 \leq i \leq \sqrt n - 1$, then we assign 0 or 1 to the variable so that
${\rm Parity}_{\sqrt n}(x_{\sqrt n \cdot i + 1}, x_{\sqrt n \cdot i + 2}, \ldots,
 x_{\sqrt n \cdot i + \sqrt n}) = 0$ and we do not consider the number of
eliminated gates.
By the modification, we can eliminate at least $3n - o(n)$ gates.
\end{proof}

\begin{proof}[Proof of Theorem~\ref{thrm:ndc_main}]
By Lemma~\ref{lem:ndc_ub} and Lemma~\ref{lem:ndc_lb}, the theorem holds.
\end{proof}

\section{Bounded space, nondeterminism, and large advice} \label{sec:l}

\subsection{Preliminaries} \label{subsec:l_pre}

Let $n$ be the input size.
{\sf L} is the class of decision problems solvable by a $O(\log n)$ space
Turing machine.
{\sf NL} is the nondeterministic variant of {\sf L}.
{\sf L/poly} is the class of decision problems solvable by a $O(\log n)$ space
Turing machine with polynomial size advice.
{\sf NL/poly} is the nondeterministic variant of {\sf L/poly}.
{\sf L/quasipoly} is the class of decision problems solvable by a $O(\log n)$
space Turing machine with quasipolynomial size advice.

\subsection{Proof of Theorem~\ref{thrm:l_conv}} \label{subsec:l_conv_p}

\begin{lemma} \label{lem:l_conv}
If any nondeterministic circuit of size $s + \lceil w/2 \rceil$ and width $w$
can be converted to a deterministic circuit of size $s'$ and width $w'$,
then any nondeterministic circuit of size $2s$ and width $w$ can be converted
to a deterministic circuit of size $O(2^ws')$ and width $w'+2$.
\end{lemma}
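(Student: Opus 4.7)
The plan is a divide-and-conquer along a cut of the layered nondeterministic circuit, combined with explicit enumeration of the $2^w$ possible bit-patterns of the wires crossing the cut. Given a nondeterministic circuit $C$ of size $2s$ and width $w$, I would first cut $C$ at a layer boundary so that its prefix $C_1$ and suffix $C_2$ have sizes summing to $2s$, with the cut biased slightly so that $C_1$ can absorb a few extra pattern-check gates later on. The interface between $C_1$ and $C_2$ consists of at most $w$ wire values, namely the outputs of $C_1$'s final layer.

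For each $p\in\{0,1\}^w$, I would then form two nondeterministic single-output sub-circuits. Let $A_p$ be $C_1$ augmented with a small AND-chain (together with the negations required by $p$) whose single output is $1$ exactly when the $w$ interface wires take the values $p_1,\ldots,p_w$; the cut is chosen so that $A_p$ has size at most $s+\lceil w/2\rceil$ and width $w$. Let $B_p$ be $C_2$ with its $w$ interface inputs hard-wired to the constants $p_1,\ldots,p_w$; by construction $B_p$ already has size at most $s+\lceil w/2\rceil$ and width $w$. Applying the hypothesis to each of $A_p$ and $B_p$ produces a deterministic circuit of size $s'$ and width $w'$. Correctness follows because $C$ accepts $x$ iff some guess drives the interface to some pattern $p$ with $A_p(x)=B_p(x)=1$, so the target is $\bigvee_{p\in\{0,1\}^w}\bigl(A_p\wedge B_p\bigr)$.

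To realise this $2^w$-wide disjunction while adding only $+2$ to the width, I would process the patterns sequentially: maintain a one-bit running-OR accumulator, and for each $p$ in turn run the deterministic form of $A_p$, then the deterministic form of $B_p$, AND their single-bit outputs into one bit, and OR that bit into the accumulator. At any moment the active layer uses at most $w'$ gates from the current sub-computation plus two auxiliary bits, giving total width $w'+2$; the size totals $2^w\cdot(2s'+O(1))=O(2^w s')$. The principal obstacle is the tight gate budgeting: the slack of $\lceil w/2\rceil$ allowed by the hypothesis must cover the pattern-check overhead on the $A_p$ side, which forces the split to be slightly asymmetric. Since a layer-boundary cut can shift $C_1$'s gate count by as much as $w$ at once, I would, if needed, place the cut inside the boundary layer and reassign a few of its gates to the $C_2$ side (using COPY gates to carry any wires that now cross the cut), which lets me hit the required balance without breaking the width bound.
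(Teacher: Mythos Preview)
Your proposal is essentially the paper's own argument, spelled out in more detail: the paper too cuts $C$ at a layer into two pieces $C_1,C_2$ of size at most $s+\lceil w/2\rceil$, enumerates the at most $2^w$ interface patterns, checks for each pattern whether both pieces are satisfiable, and then appeals to a ``natural construction of such circuit'' for the rest. Your sequential running-OR accumulator is precisely the construction the paper leaves implicit to justify the $+2$ in the width bound, and the tight gate-budgeting issue you flag at the end is glossed over in the paper as well.
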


\begin{proof}
Let $C$ be a nondeterministic circuit of size $2s$ and width $w$.
We separate $C$ to two circuits at a layer such that each two circuits
has size at most $s + \lceil w/2 \rceil$.
Let $C_1$ and $C_2$ be the former circuit and the latter circuit, respectively.
The number of values (0 or 1) from $C_1$ is at most $w$.
For each at most $2^w$ combination of 0 and 1, we check whether both of
$C_1$ and $C_2$ are satisfied.
Natural construction of such circuit is enough to prove the lemma. % todo
\end{proof}

\begin{proof}[Proof of Theorem~\ref{thrm:l_conv}]
We apply Lemma~\ref{lem:l_conv} recursively.
\end{proof}

%\begin{lemma} \label{lem:l_conv}
%If any nondeterministic branching program of size $s$ can be converted
%to a Boolean circuit of size $s'$ and width $w$,
%then any nondeterministic branching program of size $2s$ can be converted
%to a Boolean circuit of size $2s^2s' + O(s^2)$ and width $w+2$.
%\end{lemma}
%\begin{proof}
%Let $G$ be a nondeterministic branching programs of size $2s$.
%Let $G_1$ and $G_2$ be the former $s$ nodes and the latter $s$ nodes,
%respectively, in arbitrary topological sorted order.
%Let $E_1$ be the edges between $G_1$ and $G_2$.
%The number of edges in $E_1$ is at most $s^2$.
%All paths from the start node to a sink node contain one edge in $E_1$.
%For each edge in $E_1$, we check the existence of a path from the start
%node to the 1 sink node.
%Natural construction of such circuit is enough to prove the lemma.
%%
%% More precisely, the construction of the circuit is as follows.
%% todo
%\end{proof}

\subsection{Proof of Theorem~\ref{thrm:l_main}} \label{subsec:l_main_p}

In this subsection, we prove Theorem~\ref{thrm:l_main}
by Theorem~\ref{thrm:l_conv}.

\begin{proof}[Proof of Theorem~\ref{thrm:l_main}]
Let $n$ be the size of the input.
We apply polynomial of $n$ to $s$ and $O(\log n)$ to $w$
in Theorem~\ref{thrm:l_conv}.
Then, we obtain that any nondeterministic circuit of
polynomial size of $n$ and width $O(\log n)$ can be converted
to a deterministic circuit of size $2^{O(\log^2 n)}$ and width $O(\log n)$.
Nondeterministic circuits of polynomial size and width $O(\log n)$ correspond
to {\sf NL/poly}. % todo
Deterministic circuits of size $2^{O(\log^2 n)}$ and width $O(\log n)$
correspond to {\sf L/quasipoly}. % todo
\end{proof}

\section{Concluding Remarks and Open Problems}

In this paper, we considered bounded width circuits and nondeterministic
circuits in three somewhat new directions.
Many open problems are raised after this paper.

In the first part of this paper, we proved the lower bounds for bounded width
circuits.
We remark a relation between bounded width circuits and bounded depth circuits.
\begin{theorem}
Any (deterministic) circuit of size $O(n)$ and depth $(\log n)$ can be
converted to a (deterministic) circuit of size $O(n^{1 + \epsilon})$ and
width $O(n / \log\log n)$.
\end{theorem}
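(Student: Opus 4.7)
The plan is to exploit the logarithmic depth by unfolding $C$ into an equivalent formula $F$ and then evaluating $F$ with a small-space algorithm. Since $C$ has fan-in 2 and depth $\log n$, the tree obtained by duplicating each gate once for each distinct path from the output to it is a binary tree of depth $\log n$, hence of size at most $2^{\log n + 1} = O(n)$. The resulting formula $F$ computes the same function as $C$, regardless of how much sharing there is in $C$.

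A formula of depth $d$ and size $s$ can be evaluated by a depth-first postorder traversal that stores only the current root-to-leaf path together with a constant amount of bookkeeping per level; this uses $O(d \log s)$ workspace and time $O(s)$. Applied to $F$, this yields a deterministic evaluation algorithm in space $O(\log^2 n)$ and time $O(n)$. Simulating the algorithm step-by-step as a layered Boolean circuit, with one column per time step encoding the workspace, gives a circuit of width $O(\log^2 n)$ and size $O(n \log^2 n)$. Both bounds are comfortably inside the stated $O(n/\log \log n)$ width and $O(n^{1+\epsilon})$ size.

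The main obstacle is handling depth $c \log n$ for $c > 1$, where the naive unfolding produces a formula of size $n^c$ and the induced circuit would have size $n^{c + o(1)}$, too large for $O(n^{1+\epsilon})$. To cover this case one bypasses unfolding entirely and views $C$ as a DAG on $O(n)$ vertices of bounded in-degree, then invokes the Hopcroft--Paul--Valiant pebbling theorem together with the standard time-space tradeoff: $C$ can be pebbled with $O(n/\log n) \subseteq O(n/\log\log n)$ pebbles using $n^{1+\epsilon}$ moves. Translating the pebble strategy into a layered circuit---one layer per pebbling move, one gate per pebble per layer---yields the desired width and size directly, without using the depth hypothesis at all.
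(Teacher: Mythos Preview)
Your first approach via formula unfolding and depth-first evaluation is sound when the depth is at most $\log_2 n$ (and in fact gives width $O(\log^2 n)$, far better than required). The gap is in the second part, which is supposed to handle depth $c\log n$ for arbitrary constant $c$.

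The Hopcroft--Paul--Valiant theorem only asserts that any $n$-vertex DAG of bounded in-degree can be pebbled with $O(n/\log n)$ pebbles; it gives no polynomial bound on the number of moves, and the recursive strategy in their proof can take $2^{\Omega(n/\log n)}$ moves. There is no ``standard time--space tradeoff'' that simultaneously achieves $O(n/\log n)$ pebbles and $n^{1+\epsilon}$ moves on arbitrary DAGs --- Lengauer and Tarjan showed that for some bounded-in-degree graphs, pebble counts near the HPV bound force superpolynomially many moves. Your assertion that the construction works ``without using the depth hypothesis at all'' therefore cannot stand: if it did, every linear-size circuit of \emph{arbitrary} depth would admit a width-$O(n/\log\log n)$, size-$n^{1+\epsilon}$ simulation, which is not known and would be a major result on its own.

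The paper's proof uses the depth hypothesis essentially, via Valiant's depth-reduction lemma: in any size-$O(n)$, depth-$O(\log n)$ circuit one can find $O(n/\log\log n)$ edges whose removal drops the remaining depth to at most $\epsilon\log n$. The converted circuit keeps the values on these $O(n/\log\log n)$ cut edges as its running state (this is the width) and computes them one by one; after the cut, each such value is determined by a sub-circuit of depth $\epsilon\log n$, hence of size at most $n^{\epsilon}$, so the total size is $O(n^{1+\epsilon})$. Phrased as pebbling, this \emph{is} a strategy with $O(n/\log\log n)$ pebbles and $O(n^{1+\epsilon})$ moves, but it comes from Valiant's lemma together with the $O(\log n)$ depth bound, not from HPV.
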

\begin{proof}[Proof sketch]
Let $C$ be a circuit of size $O(n)$ and depth $(\log n)$.
It is known that we can find $O(n / \log\log n)$ edges in $C$ whose
removal yields a circuit of depth at most $\epsilon \log n$
(\cite{V76}, Section~14.4.3 of \cite{AB09}).
We construct a circuit which computes the value of $O(n / \log\log n)$ edges
one by one.
\end{proof}
Thus, improvement of Theorem~\ref{thrm:bwc_main} (or
Theorem~\ref{thrm:bwc_brs}) is also an approach to prove the lower bounds
for bounded depth circuits.

In the second part of this paper, we considered the power of nondeterministic
circuits. To prove the main theorem (Theorem~\ref{thrm:ndc_main}), we
introduced a simple proof strategy using nondeterministic selecting.
It remains open that we use the strategy and prove a similar or improved
result of Theorem~\ref{thrm:ndc_main} for $U_2$-circuits or other
Boolean circuits. % todo

In the third part of this paper, we proved that
$\mathsf{L/quasipoly} \supset \mathsf{NL}$ (Corollary~\ref{coro:l_main}).
It remains open whether this result can be improved to
$\mathsf{L/poly} \supset \mathsf{NL}$.
Another direction is revealing the power of large advice in {\sf L}.
In this paper, we proved that {\sf L} with quasipolynomial size advice
has nontrivial computational power.
It may be interesting that some relations between {\sf L} with advice
beyond polynomial size and other complexity classes ({\sf P},
{\sf PSPACE} and so on) are proved.

\bibliographystyle{plain}
\bibliography{circuit}

\end{document}